\pgfplotsset{
	compat=newest
}
\newcommand{\so}{\scriptscriptstyle \rm I}
\newcommand{\st}{\scriptscriptstyle \rm I\hspace{-1pt}I}
\newcommand{\qo}{{\rm i}}
\newcommand{\qt}{{\rm ii}}
\newcommand{\be}[1]{\begin{equation}\label{#1}}
\newcommand{\ee}{\end{equation}}
\newcommand{\bu}{\bar u}
\newcommand{\bv}{\bar v}
\newcommand{\rank}{\mathop{\rm rank}}
\newcommand{\tr}{\mathop{\rm tr}}
\newcommand\CC{\mathbb C}
\newcommand{\ben}{\begin{eqnarray}}
\newcommand{\een}{\end{eqnarray}}
\newtheorem{theorem}{Theorem}[section]
\newtheorem{prop}{Proposition}[section]
\newtheorem{Def}{Definition}[section]
\def\<{\langle}
\def\>{\rangle}
\numberwithin{equation}{section}
\begin{document}

\date{}

\title{Modified rational six vertex model on the rectangular lattice}

\author[1]{S. Belliard \thanks{samuel.belliard@lmpt.univ-tours.fr}}
\author[2]{R.A. Pimenta \thanks{rodrigo.alvespimenta@umanitoba.ca}}
\author[3]{N.A. Slavnov \thanks{nslavnov@mi-ras.ru}}

\affil[1]{Institut Denis-Poisson, Universit\'e de Tours, Universit\'e d'Orl\'eans, Parc de Grammont, 37200 Tours, France}
\affil[2]{Department of Physics and Astronomy, University of Manitoba, Winnipeg R3T 2N2, Canada}
\affil[3]{Steklov Mathematical Institute of Russian Academy of Sciences, 8 Gubkina str., Moscow, 119991,  Russia}

\affil[ ]{\textit {}}
\maketitle

\begin{abstract}
We consider a rational six vertex model on a rectangular lattice with boundary conditions that generalize the usual domain wall type. We find that the partition function of the inhomogeneous version of this model is given by a modified Izergin determinant. The proofs are based on the quantum inverse scattering method and its representation theory together with elementary linear algebra.
\end{abstract}

\section{Introduction}

A partition function is the central object in Statistical Mechanics. Computing the partition function is often a difficult combinatorial problem that sometimes
can be solved exactly, particularly in two dimensional lattices \cite{Onsager1944,Gaudin83,Baxter1985}. A prominent
example is the six vertex model, which can be defined on a rectangular lattice
by assigning two possible states to each of its edges (see Figure \ref{fig1}). The possible configurations
around a vertex are constrained by the ice rule. Boundary conditions must be imposed, and it turns out
that they play an
important role in the nature of the mathematical representation of the partition function.

Different solvable boundary conditions have been considered for
the six vertex model. The most common are the torus, the cylinder and
fixed boundary conditions like domain wall and reflecting end, as well as mixtures of them. The first two
types can be computed via the diagonalization of an appropriate transfer matrix,
which can in principle be done
using Bethe ansatz, as early noted by Lieb
in his solution of the square ice \cite{Lieb1967}. On the other hand, fixed
boundary conditions like domain wall type admit a determinant representation \cite{Korepin1982,Izergin1987}, so do the
reflecting end case \cite{Tsuchiya1998}. In this paper, we will focus on fixed boundary conditions
that generalize the domain wall type.

The domain wall boundary condition for the six vertex model in the square lattice
was introduced by Korepin \cite{Korepin1982},
where a recurrence relation for the partition function was found. The solution
of the recurrence was later given by Izergin \cite{Izergin1987} in the form of a determinant
(see, for example, the monograph \cite{KBI} for details). The partition
function on the rectangular lattice, also called a partial domain wall boundary, was considered
recently and it is also given by a determinant \cite{Foda2012}.

Here we propose a generalization of the rational six vertex partition function
on the rectangular lattice with four
arbitrary boundaries.  In the language of the
quantum inverse scattering method, this partition function
is given by certain expectation values of a string of modified creation operators,
which arise in the context of the modified algebraic
Bethe ansatz (see for instance \cite{BePi15,BSV18,BSV18a,BS19a,BSV21}
and references therein). The expectation values are associated with four arbitrary
wall states labeled with the compass directions (\ref{compass}).
We argue that the partition function satisfies a homogeneous
system of linear equations, which follows
directly from the (modified) triangular representation theory of the Yangian  $Y(gl(2))$ and
Yang-Baxter algebra via certain off-shell relations. We then show that the linear system is solved by a modified Izergin determinant (\ref{ZmodIKmn}).

Let us recall that the partition function of the six vertex model
with domain wall boundary condition was shown to
solve a system of functional equations \cite{Galleas2010} (see
also \cite{Stroganov2006}), as well
as a system of algebraic equations \cite{Minin2023}.

The linear system approach was recently discovered
in the computation of scalar products between on-shell and off-shell
Bethe states \cite{BS19}. Its powerfulness has been demonstrated in the
computation of on-shell/off-shell scalar products of Bethe states in the
closed XYZ chain \cite{SZZ20} and in the
open XXZ chain with general integrable boundaries \cite{BPS21}. Here we
show that it can also be used to compute partition functions opening a
new avenue of possibilities.

This paper is organized as follows. In Section \ref{sec:2}, we
recall the definition of an arbitrary vertex model on the rectangular lattice, its partition
function with arbitrary twists and some essential ingredients of the quantum inverse scattering
method. In Section \ref{sec:3}, auxiliary operators and the associated
representation theory are studied in the rational six vertex case. Next, in Section \ref{sec:4}, we derive
the homogeneous linear systems satisfied by the partition function
and construct the solution in the form of a modified Izergin determinant.
Our concluding remarks and further directions of research are presented
in Section \ref{sec:conc}. Some technical details are given in the appendices
\ref{sec:yangian} and \ref{sec:mi}.

\vspace{1cm}

\textbf{Notation.} We use a shorthand notation for sets of variables and products over them.
For example, we denote a set of $m$ variables $u_j$ by $\bar u = \{u_1,\dots,u_m\}$. We
usually leave
the cardinality implicit and note it as $\# \bar u=m$. The removal of the $i$-th element of the set
$\bar u$ is denoted $\bar u_i=\bar u \backslash u_i$. For the product of
a two variable function $g(u,v)$ over the set $\bar u$ we use,
\ben\label{nota1}
g(z,\bar u) = \prod_{x\in \bar u}g(z,x),\quad
g(\bar u,z) = \prod_{x\in \bar u}g(x,z),\quad
g(\bar u,\bar v) = \prod_{x\in \bar u,y\in \bar v}g(x,y)\,.
\een
We will also use such notation for the product of commuting operators
\ben\label{nota1o}
B(\bar u)= \prod_{x\in \bar u} B(x).
\een
If no product is involved, a vertical bar is used to indicate the
multivariable function dependency, {\it e.g.},
\ben\label{nota2}
s(u|\bar u) = s(u|{u_1,\dots,u_{m}})\,,\quad r(\bar u|\bar w) = r(u_1,\dots, u_m|v_1,\dots,v_n).
\een
The following functions will be used,
\ben\label{funcs}
g(u,v)=\frac{c}{u-v},\quad f(u,v)=\frac{u-v+c}{u-v},\quad h(u,v)=\frac{u-v+c}{c},
\quad \tilde h(u,v)=\frac{u-v-c}{c}\,.
\een
They satisfy the relations
\ben\label{prop-funcs}
f(u,v)=g(u,v)+1=g(u,v)h(u,v)=g(v,u)\tilde h(v,u).
\een

\section{Partition function and quantum groups}\label{sec:2}

In this section we review some basic concepts in the theory of integrable vertex models,
including
the operator formulation of the partition function with general open boundary condition.
The material in this section is valid for an arbitrary
integrable vertex model
with arbitrary number of states.

\subsection{R-matrix and Yang-Baxter equation}\label{sec:yangbaxter}


Let us recall the inhomogeneous vertex model on the $m \times n$ rectangular lattice from the quantum group formalism (see Figure \ref{fig1}).
A finite dimensional vector space  denoted $V_{a_i}$ carrying a free parameter $u_i$,
also called inhomogeneity, is associated with each line of the lattice $i=1,\dots, m$. Similarly, a
finite dimensional vector
space $V_{b_i}$ with parameter  $v_i$  is associated with each column $i=1,\dots, n$. Every vertex is labeled by a pair of variables $(u, v)$ and it is encoded by an invertible matrix  $R_{ab}(u,v)$ (see Figure \ref{F-R-matrix}).
The entries of this matrix are the statistical weights of the model. Explicitly, we have,
\ben
R_{ab}(u,v)=\sum_{jk,\ell m}R_{jk,\ell m}(u,v) (E_{jk})_a \otimes (E_{\ell m})_b
\een
where the $(E_{ij})_a$ are unit matrices (that satisfy $E_{ij}E_{kl}=\delta_{jk}E_{il}$) acting on the space $V_a$, and the sum is taken over $(j,k)\in \{1,\dots, \dim(V_a)\}^2$ and $(\ell,m)\in \{1,\dots, \dim(V_b)\}^2$. One imposes that this matrix is a solution of the Yang Baxter equation,
\ben\label{YB}
R_{ab}(u,v)R_{ac}(u,w)R_{bc}(v,w)=R_{bc}(v,w)R_{ac}(u,w)R_{ab}(u,v)\,,
\een
and therefore ensures the integrability of the model.
%
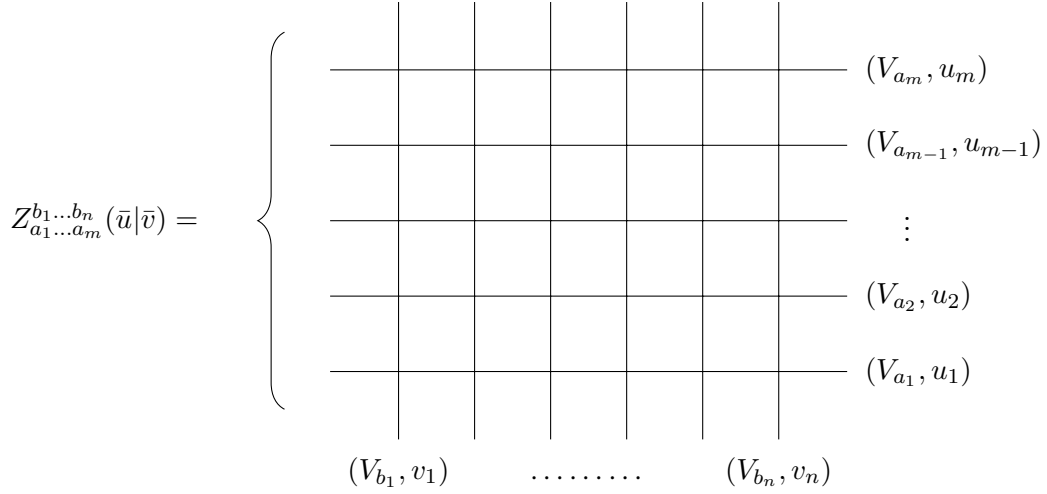
\begin{figure}[h!]
\centering
\begin{tikzpicture}
\draw (0.1,0.1) grid (6.9,5.9);
\fill (7,1) node[right] {$(V_{a_1},u_{1})$};
\fill (7,2) node[right] {$(V_{a_2},u_{2})$};
\fill (8-0.5,3) node[right] {$\vdots$};
\fill (7,4) node[right] {$(V_{a_{m-1}},u_{m-1})$};
\fill (7,5) node[right] {$(V_{a_m},u_{m})$};
\fill (-1.5,3) node[left] {$Z^{b_1\dots b_n}_{a_1\dots a_m}(\bar u|\bar v)=$ };
\draw [decorate,decoration={brace,amplitude=10pt}]
(-0.5,0.5) -- (-0.5,5.5) ;
\fill (1,0) node[below] {$(V_{b_1},v_{1})$};
\fill (3.5,-0.2) node[below] {$\cdots$$\cdots$$\cdots$};
\fill (6,0) node[below] {$(V_{b_{n}},v_{n})$};
\end{tikzpicture}
\caption{\label{fig1} Inhomogeneous vertex model on the $m \times n$ rectangular lattice.
}
\end{figure}

\begin{figure}
\centering
\begin{tikzpicture}
\draw (0.1,0.1) grid (1.9,1.9);
\fill (0,1) node[left] {$(V_{a},u)$};
\fill (-1.5,1) node[left] {$R_{ab}(u,v)=$ };
\fill (1,0) node[below] {$(V_{b},v)$};
\end{tikzpicture}
\caption{\label{F-R-matrix} Graphical picture of a vertex and the R-matrix.}
\end{figure}
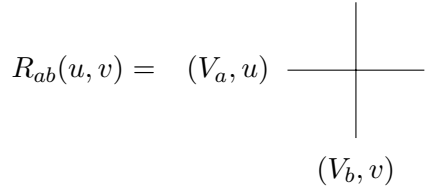
In the so-called auxiliary space formalism, we can write the matrix of the partition function in the form,
\ben\label{Zmatrix}
Z^{b_1\dots b_n}_{a_1\dots a_m}=\overrightarrow{\prod_{i=1}^m}\overrightarrow{\prod_{j=1}^n}R_{a_ib_j}(u_i,v_j)\,,
\een
which is an endomorphism of $V_{a_1}\otimes \dots  \otimes V_{a_m} \otimes V_{b_1}\otimes \dots  \otimes V_{b_n}$.
The following exchange relations follow from the Yang-Baxter algebra,
\ben\label{yba1}
R_{a_ia_{i+1}}(u_i,u_{i+1})Z^{b_1\dots b_n}_{a_1\dots a_ia_{i+1} \dots a_m}= Z^{b_1\dots  b_n}_{a_1\dots a_{i+1}a_i \dots a_m}R_{a_ia_{i+1}}(u_i,u_{i+1})
\een
and
\ben\label{yba2}
R_{b_ib_{i+1}}(v_i,v_{i+1})Z^{b_1\dots b_ib_{i+1}\dots  b_n}_{a_1 \dots a_m}= Z^{b_1\dots b_{i+1}b_i \dots  b_n}_{a_1\dots a_m}R_{b_ib_{i+1}}(v_i,v_{i+1}).
\een

We now assume that the R-matrix has the properties,
\ben\label{inv}
[R_{ab}(u,v),B_aB_b]=[R_{ab}(u,v),\Hat B_a \Hat B_b]=0\,,
\een
where $B$ and $\Hat B$ are matrices in $\text{End}(V)$. Using the
matrices $B$ and $\Hat B$ we can add a twist to each auxiliary space. This motivates
the definition of a ``quasiperiodic", or twisted inhomogeneous partition function, by taking
the trace over each space, namely,
\ben\label{Znm}
Z_{mn}(\bar u|\bar v|B|\overline B)=\tr_{\bar a,\bar b}\Big((\prod_{i=1}^n\hat B_{b_i})(\prod_{j=1}^mB_{a_j}) Z^{b_1\dots b_n}_{a_1\dots a_m}(\bar u|\bar v)\Big)
\een
where we used the shorthand notation (\ref{nota1},\ref{nota2}).
This definition generalizes the standard definition of the periodic partition function given by $B_a=\hat B_a=1_{V_a}$, that always satisfies (\ref{inv}) (see {\it e.g.} Chapter 7.5 of \cite{Chari}).
It is easy to show that (\ref{Znm}) is a symmetric function over the sets $\bar u$ or $\bar v$.
Moreover, if we relax the symmetric property for one set of variables, we can introduce different
twist matrices for each vector space. That is, the transformation $\prod_{i=1}^n \Hat B_{b_i} \to \prod_{i=1}^n\Hat B^{(i)}_{b_i}$ breaks the symmetry over permutation of the set  $\bar v$ and the transformation $\prod_{j=1}^m B_{a_j} \to \prod_{j=1}^m B^{(j)}_{a_j}$ break the symmetry over permutation of the set  $\bar u$.

\subsection{Operator formalism and quantum group}\label{sec:OFQG}

We now express the matrix of the partition function (\ref{Zmatrix})
in terms of products of single row monodromy matrices, either in the horizontal or vertical
direction of the lattice.

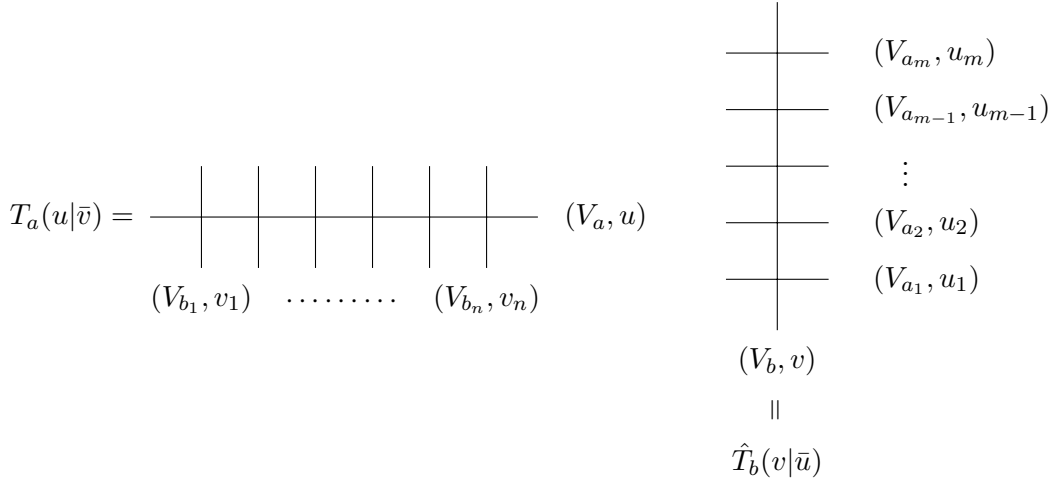
\begin{figure}[h!]
\begin{subfigure}[c]{0.50\textwidth}
\centering
\begin{tikzpicture}[scale=0.75]
\draw (0.1,0.1) grid (6.9,1.9);
\fill (9,1) node[left] {$(V_{a},u)$};
\fill (0,1) node[left] {$T_a(u|\bar v)=$ };
\fill (1,0) node[below] {$(V_{b_1},v_{1})$};
\fill (3.5,-0.2) node[below] {$\cdots$$\cdots$$\cdots$};
\fill (6,0) node[below] {$(V_{b_{n}},v_{n})$};
\end{tikzpicture}
  \end{subfigure}
  \quad
  \begin{subfigure}[c]{0.50\textwidth}
\centering
\begin{tikzpicture}[scale=0.75]
\draw (0.1,0.1) grid (1.9,5.9);
\fill (2.5,1) node[right] {$(V_{a_1},u_{1})$};
\fill (2.5,2) node[right] {$(V_{a_2},u_{2})$};
\fill (3,3) node[right] {$\vdots$};
\fill (2.5,4) node[right] {$(V_{a_{m-1}},u_{m-1})$};
\fill (2.5,5) node[right]  {$(V_{a_m},u_{m})$};
\fill (1,0) node[below] {$(V_{b},v)$};
\fill (1,-1.3) node  {\rotatebox{90}{$=$}};
\fill (1,-2.2) node {$\Hat T_{b}(v|\bar u)$};
\end{tikzpicture}
  \end{subfigure}
\caption{\label{mono} Single row monodromy matrix (left)
and single column monodromy matrix (right).
}
\end{figure}

The horizontal single row monodromy matrix is given by
\ben\label{monoT}
T_a(u|\bar v)=\overrightarrow{\prod_{j=1}^n}R_{ab_j}(u,v_j),
\een
and can be represented as in the left panel of Figure \ref{mono}. Is is clear that (\ref{Zmatrix}) can be written
as,
\ben
Z^{b_1\dots b_n}_{a_1\dots a_m}(\bar u|\bar v)=\overrightarrow{\prod_{i=1}^m} T_{a_i}(u_i|\bar v).
\een
The single row monodromy matrix (\ref{monoT}) satisfies the RTT relation,
\ben\label{RTT}
R_{ab}(u_a,u_b)T_a(u_a|\bar v)T_b(u_b|\bar v)=T_b(u_b|\bar v)T_a(u_a|\bar v)R_{ab}(u_a,u_b)\,.
\een
It allows one to define the transfer matrix with an arbitrary twist $B_a$ given by
\ben\label{Bop}
B(u)=\tr_a(B_a T_a(u|\bar v)),
\een
which is represented on the left panel of Figure \ref{figBop}, and it is integrable since it leads to a family of mutually commuting operators,
\ben
[B(u_1),B(u_2)]=0.
\een
The partition function can then be written in the operator formulation,
\ben\label{ZtrA}
Z_{mn}(\bar u|\bar v|B|\overline B)=\tr_{\bar b} \Big((\prod_{i=1}^n\hat B_{b_i}) B(\bar u) \Big)\,.
\een

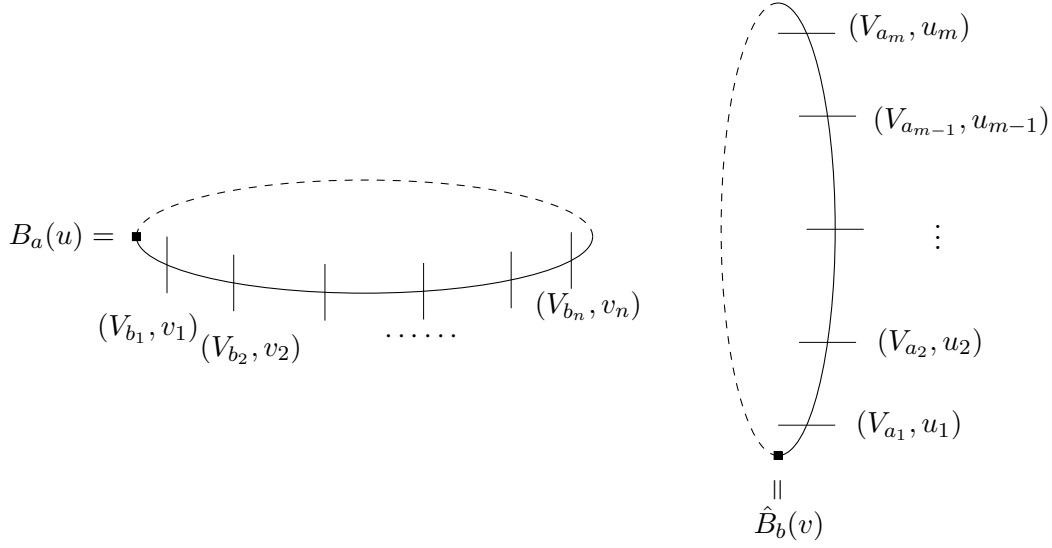
\begin{figure}[h]
\begin{subfigure}[c]{0.50\textwidth}
\centering
\begin{tikzpicture}[scale=0.75]
\fill (-4.2,0) node[left] {$B_a(u)=$ };
\draw [dashed] (4,0) arc[x radius=4, y radius=1, start angle=0, end angle=180];
\draw (4,0) arc[x radius=4, y radius=1, start angle=0, end angle=-180];
\def\ang{25};
\foreach \i in {1,2,3,4,5,6}
\draw [line width=0.1mm,   black]  ($(-\ang*\i:4 and 1)-(0,0.5)$) --  ($(-\ang*\i:4 and 1)+(0,0.5)$) node [below] {};
\node at ($(-1.5,-0.8)+(-\ang*5:4 and 1)$) {$(V_{b_1},v_{1})$};
\node at ($(-1.3,-1.0)+(-\ang*4:4 and 1)$) {$(V_{b_2},v_{2})$};
\node at ($(0,-0.8)+(-\ang*3:4 and 1)$) {$\cdots$$\cdots$};
\node at ($(0.3,-0.8)+(-\ang*1:4 and 1)$) {$(V_{b_n},v_{n})$};
\draw [black] plot [only marks, mark=square*] coordinates {(-4,0)};
\end{tikzpicture}
  \end{subfigure}
  \quad
  \begin{subfigure}[c]{0.50\textwidth}
\centering
\begin{tikzpicture}[scale=0.75,rotate=90]
\fill (-4.6,-0.3) node[left] {\rotatebox{90}{$=$}};
\fill (-5.2,-1) node[left] {$\hat B_b(v)$};
\draw [dashed] (4,0) arc[x radius=4, y radius=1, start angle=0, end angle=180];
\draw (4,0) arc[x radius=4, y radius=1, start angle=0, end angle=-180];
\foreach \i in {1,2,3,4,5}
\draw [line width=0.1mm,   black]  ($(-30*\i:4 and 1)-(0,0.5)$) --  ($(-30*\i:4 and 1)+(0,0.5)$) node [below] {};
\node at ($(0,-0.8-1)+(-30*5:4 and 1)$) {$(V_{a_1},u_{1})$};
\node at ($(0,-0.8-1)+(-30*4:4 and 1)$) {$(V_{a_2},u_{2})$};
\node at ($(0,-0.8-1)+(-30*3:4 and 1)$) {$\vdots$};
\node at ($(-0.1,-0.85-1.5)+(-30*2:4 and 1)$) {$(V_{a_{m-1}},u_{m-1})$};
\node at ($(0.1,-0.8-1)+(-30*1:4 and 1)$) {$(V_{a_m},u_{m})$};
\draw [black] plot [only marks, mark=square*] coordinates {(-4,0)};
\end{tikzpicture}
  \end{subfigure}
\caption{\label{figBop} Modified operators $B(u)$ (left) and $\hat B(v)$ (right). The twists are represented by the filled squares.
}
\end{figure}

Similarly, we can consider the single column monodromy matrix defined by,
\ben\label{monoThat}
\Hat T_{b}(v|\bar u)=\overrightarrow{\prod_{i=1}^m}R_{a_ib}(u_i,v)\,,
\een
(see right panel of Figure \ref{mono}).
It satisfies the RTT relation,
\ben\label{RTThat}
R_{ba}(v_b,v_a) \Hat  T_a(v_a|\bar u)\Hat T_b(v_b|\bar u)=\Hat T_b(v_b|\bar u)\Hat T_a(v_a|\bar u)R_{ba}(v_b,v_a)\,,
\een
from which the additional family of mutually commuting operators
\ben\label{hBop}
\Hat B(v)=\tr_b(\Hat B_b \Hat T_b(v|\bar u))\,,
\een
with arbitrary twist $\Hat B_b$ can be obtained, by tracing over a given auxiliary space $V_b$,
(See a graphical representation in the right panel of Figure \ref{figBop}).
A second reformulation of the matrix of partition function follows from $\Hat T$, namely,
\ben
Z^{b_1\dots b_n}_{a_1\dots a_m}(\bar u|\bar v)=\overrightarrow{\prod_{i=1}^n}\Hat T_{b_i}(v_i|\bar u).
\een
and leads to the partition function in terms of the transfer matrix $\Hat B(u)$,
\ben\label{ZtrB}
Z_{mn}(\bar u|\bar v|B|\overline B)=\tr_{\bar a} \Big((\prod_{i=1}^m B_{a_i}) \Hat B(\bar v) \Big)\,.
\een

We have therefore two equivalent transfer matrix formulations (\ref{ZtrA}) and (\ref{ZtrB}) of the partition
function (\ref{Znm}). Depending on the nature of the twists $\{B,\hat B\}$, we have different
boundary conditions for the partition function (see Figure \ref{PartitionF}). If both twists do not have
rank one, we have a torus type of partition function. If only one of the twists has rank one, we have a cylinder and if both satisfies this condition we have the plane. If
one or both of the twists have rank one, the trace reduces to matrix elements of the matrices
$B(\bar u)$ or $\Hat B(\bar v)$ (see next (\ref{ZexpV})).

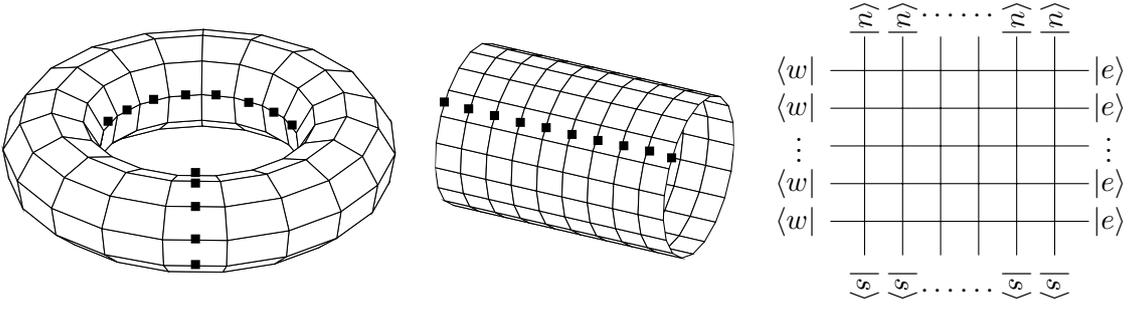
\begin{figure}[h!]
\begin{subfigure}[c]{0.30\textwidth}
\centering
\begin{tikzpicture}
	\begin{axis}[
		axis equal image,
		hide axis,
		z buffer = sort,
		scale = 1,
	]
		\addplot3[
		surf,
		fill=white,
		colormap/blackwhite,
		point meta=0,
		samples = 10,
		samples y =20,
		domain = 0:2*pi,
		domain y = 0:2*pi,
		thin,
		](
			{(3+sin(deg(\x)))*cos(deg(\y))},
			{(3+sin(deg(\x)))*sin(deg(\y))},
			{cos(deg(\x))}
		);
		\end{axis}
\begin{scope}[xshift = 2cm]
\draw [black] plot [only marks, mark size=1.5, mark=square*] coordinates {(1.35,2.07)};
\draw [black] plot [only marks, mark size=1.5, mark=square*] coordinates {(1.35,1.93)};
\draw [black] plot [only marks, mark size=1.5, mark=square*] coordinates {(1.35,1.62)};
\draw [black] plot [only marks, mark size=1.5, mark=square*] coordinates {(1.35,1.19)};
\draw [black] plot [only marks, mark size=1.5, mark=square*] coordinates {(1.35,0.85)};
\draw [black] plot [only marks, mark size=1.5, mark=square*] coordinates {(0.2,2.75)};
\draw [black] plot [only marks, mark size=1.5, mark=square*] coordinates {(0.45,2.9)};
\draw [black] plot [only marks, mark size=1.5, mark=square*] coordinates {(0.8,3.04)};
\draw [black] plot [only marks, mark size=1.5, mark=square*] coordinates {(1.22,3.10)};
\draw [black] plot [only marks, mark size=1.5, mark=square*] coordinates {(1.62,3.10)};
\draw [black] plot [only marks, mark size=1.5, mark=square*] coordinates {(2.05,3)};
\draw [black] plot [only marks, mark size=1.5, mark=square*] coordinates {(2.38,2.87)};
\draw [black] plot [only marks, mark size=1.5, mark=square*] coordinates {(2.62,2.7)};
\end{scope}
\end{tikzpicture}
  \end{subfigure}
 \quad\quad
  \begin{subfigure}[c]{.30\textwidth}
    \centering
\begin{tikzpicture}
	\begin{axis}[
		axis equal image,
		hide axis,
		z buffer = sort
	]
		\addplot3[
		surf,
		fill=white,
		colormap/blackwhite,
		point meta=0,
		samples = 10,
		samples y = 20,
		domain = 0:6,
		domain y = 0:2*pi,
		thin,
		](
			{\x},
			{2*sin(deg(\y))},
			{2*cos(deg(\y))}
		);
	\end{axis}
	\begin{scope}[xshift = 0cm]
\draw [black] plot [only marks, mark size=1.5, mark=square*] coordinates {(0.12,2.62)};
\draw [black] plot [only marks, mark size=1.5, mark=square*] coordinates {(0.44,2.53)};
\draw [black] plot [only marks, mark size=1.5, mark=square*] coordinates {(0.78,2.44)};
\draw [black] plot [only marks, mark size=1.5, mark=square*] coordinates {(1.12,2.35)};
\draw [black] plot [only marks, mark size=1.5, mark=square*] coordinates {(1.46,2.28)};
\draw [black] plot [only marks, mark size=1.5, mark=square*] coordinates {(1.8,2.19)};
\draw [black] plot [only marks, mark size=1.5, mark=square*] coordinates {(2.14,2.11)};
\draw [black] plot [only marks, mark size=1.5, mark=square*] coordinates {(2.48,2.04)};
\draw [black] plot [only marks, mark size=1.5, mark=square*] coordinates {(2.82,1.97)};
\draw [black] plot [only marks, mark size=1.5, mark=square*] coordinates {(3.11,1.88)};
\end{scope}
\end{tikzpicture}
  \end{subfigure}
  \begin{subfigure}[c]{.30\textwidth}
    \centering
\begin{tikzpicture}[scale=0.5]
\def\yy{0}
\draw (0.1,0.1+\yy) grid (6.9,5.9+\yy);
\fill (0,1+\yy) node[left] {$\< w|$};
\fill (0,2+\yy) node[left] {$\< w|$};
\fill (-0.35,3.1+\yy) node[left] {$\vdots$};
\fill (0,4+\yy) node[left] {$\< w|$};
\fill (0,5+\yy) node[left] {$\< w|$};
\fill (8.15,1+\yy) node[left] {$|e\>$};
\fill (8.15,2+\yy) node[left] {$|e\>$};
\fill (8.15-0.35,3.1+\yy) node[left] {$\vdots$};
\fill (8.15,4+\yy) node[left] {$|e\>$};
\fill (8.15,5+\yy) node[left] {$|e\>$};
\fill (1,0+\yy) node[below] {\rotatebox{-90}{$|s\>$}};
\fill (2,0+\yy) node[below] {\rotatebox{-90}{$|s\>$}};
\fill (3.5,-0.15-0.25+\yy) node[below] {$\cdots\cdots$};
\fill (5,0+\yy) node[below] {\rotatebox{-90}{$|s\>$}};
\fill (6,0+\yy) node[below] {\rotatebox{-90}{$|s\>$}};
\fill (1,7.15+\yy) node[below] {\rotatebox{90}{$|n\>$}};
\fill (2,7.15+\yy) node[below] {\rotatebox{90}{$|n\>$}};
\fill (3.5,7.15-0.25+\yy) node[below] {$\cdots\cdots$};
\fill (5,7.15+\yy) node[below] {\rotatebox{90}{$|n\>$}};
\fill (6,7.15+\yy) node[below] {\rotatebox{90}{$|n\>$}};
\end{tikzpicture}
  \end{subfigure}
\caption{\label{PartitionF} Partition function in different geometries according to the nature
of the twist.
}
\end{figure}

Indeed, let us suppose that $\rank(B)=\rank(\Hat  B)=1$ and therefore that $\det(B)=\det(\Hat B)=0$.
We can then find the following bi-vector formulation,
\ben
&&B=|e\>\otimes \<w|=\sum_{ij}w_ie_jE_{ij},\quad \Hat  B=|n\>\otimes \<s|=\sum_{ij}n_is_jE_{ij}
\een
where
\ben\label{vecnot}
|x\>=\sum_{i} x_i |i\> , \quad
\<x|=\sum_{i} x_i  \<i|
\een
for arbitrary labels $x\in\{ w,e,n,s\}$ and we use the relation $|i\>\otimes \<j|=E_{ij}$ with $|i\>$ the vector with $1$ at the row $i$ and zero elsewhere, $\<j|$ is its dual and  $\<j|i\>=\delta_{ij}$.
It allows to rewrite operators
(\ref{Bop}) and (\ref{hBop}) as,
\ben\label{modB}
&&B(u)=\tr_a(B_aT_a(u|\bar v))=\null_a\<w|T_a(u|\bar v)|e\>_a=\sum_{ij}w_ie_jt_{ij}(u),\\
&&\Hat  B(v)=\tr_b(\Hat  B_b\Hat  T_b(v|\bar u))=\null_b\<s|\Hat  T_b(v|\bar u)|n\>_b=\sum_{ij}s_in_j\hat t_{ij}(v)\,.
\een
We recall that operators of this type arise in the modified Bethe ansatz \cite{BePi15},
and can be seen as a null transfer matrix as $\det(B)= \det(\hat B)=0$ \cite{BSV18}. They are represented
in Figure \ref{openB}.

\begin{figure}[h]
\begin{subfigure}[c]{0.50\textwidth}
\centering
    \begin{tikzpicture}[scale=0.75]
\draw (0.1,0.1) grid (6.9,1.9);
\fill (0,1) node[left] {$\<w|$};
\fill (8,1) node[left] {$|e\>$};
\fill (-1.0,1) node[left] {$B_a(u)=$ };
\fill (1,0) node[below] {$(V_{b_1},v_{1})$};
\fill (3.5,-0.2) node[below] {$\cdots$$\cdots$$\cdots$};
\fill (6,0) node[below] {$(V_{b_{n}},v_{n})$};
\end{tikzpicture}
  \end{subfigure}
  \quad
  \begin{subfigure}[c]{0.50\textwidth}
\centering
\begin{tikzpicture}[scale=0.75]
\draw (0.1,0.1) grid (1.9,5.9);
\fill (2.5,1) node[right] {$(V_{a_1},u_{1})$};
\fill (2.5,2) node[right] {$(V_{a_2},u_{2})$};
\fill (4-0.5,3) node[right] {$\vdots$};
\fill (2.5,4) node[right] {$(V_{a_{m-1}},u_{m-1})$};
\fill (2.5,5) node[right] {$(V_{a_m},u_{m})$};
\fill (1,0) node[below] {\rotatebox{-90}{$|s\>$}};
\fill (1,6.75) node[below] {\rotatebox{90}{$|n\>$}};
\fill (1,-1) node[below] {\rotatebox{90}{$=$}};
\fill (1,-2) node[below] {$\Hat B(v)$};
\end{tikzpicture}
  \end{subfigure}
\caption{\label{openB} Modified operators when $\rank(B)=\rank(\hat B)=1$.
}
\end{figure}
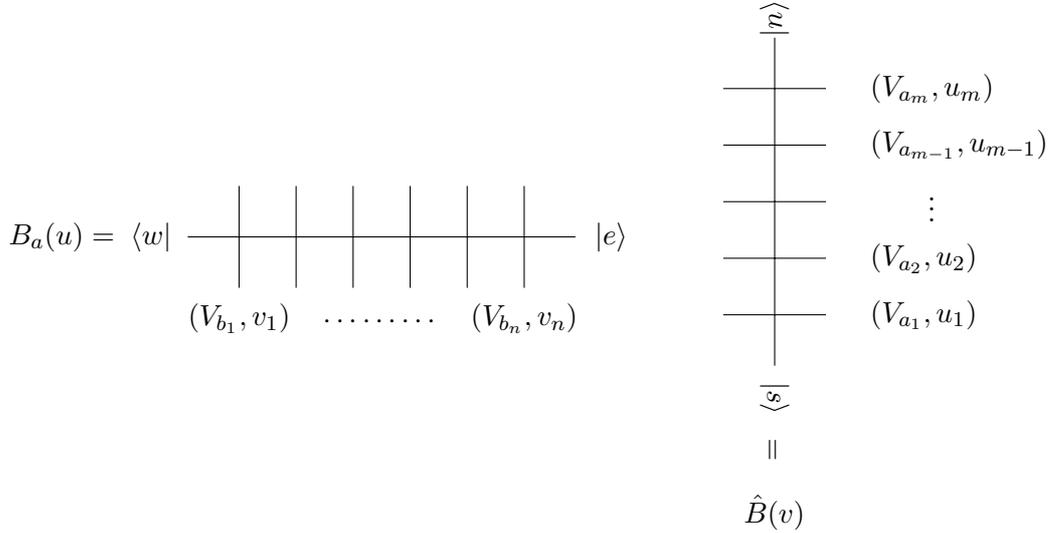

Note that we have the trace identity
\ben\label{bitr}
\tr_a(\hat B_a X_a)={}_a\<s|X_a|n\>_a
\een
for any matrix $X$ acting on vector space $V_a$.
Then, the general partition function can be rewritten into the forms of expectation values of product of operators,
\ben\label{ZexpV}
 Z_{mn}(\bar u|\bar v|B|\Hat  B)=\<S|B(\bar u)|N\>=\<W|\Hat  B(\bar v)|E\>
\een
with
\ben\label{compass}
\<W|= \<w|\otimes\dots \otimes \<w|, \quad
|E\>=|e\>\otimes\dots \otimes |e\>\\
\<S|= \<s|\otimes\dots \otimes \<s|, \quad
|N\>=|n\>\otimes\dots \otimes |n\>\,,
\een
which can be interpreted as a partition function for the six vertex model
with four arbitrary walls. The form (\ref{ZexpV}) is suitable
to be treated within the quantum inverse scattering method and linear algebra.
For the full open case both forms are possible and there is no quantization and no Bethe equations. We will see in the next section the rational six vertex case and its determinant representations.

\section{Rational six vertex model case: auxiliary operators and triangular representation theory}\label{sec:3}

All considerations so far are valid for vertex models
having an arbitrary number of states per edge. For concreteness, in the following we only consider
the symmetric (zero field) six vertex model, with $2$ states in all edges of the lattice
and therefore
 $V_{a_i}=V_{b_j}=\CC^2$.
The three possible nonzero Boltzmann weights,
called type $a$, $b$ and $c$, are represented in Figure \ref{F-abc-types}.
\begin{figure}[h!]
\begin{picture}(500,70)
\put(30,0){%
\begin{picture}(70,70)
\put(30,10){\line(0,1){40}}
\put(10,30){\line(1,0){40}}
%
\put(33,10){$\scriptstyle j$}
\put(33,45){$\scriptstyle j$}
\put(10,33){$\scriptstyle j$}
\put(48,33){$\scriptstyle j$}
\put(28,60){$v$}
\put(-5,28){$u$}
\put(60,28){$=\frac{u-v+c}{c}$}
\put(0,-10){$a$-type vertex}
\end{picture}
}
\put(180,0){%
\begin{picture}(70,70)
\put(30,10){\line(0,1){40}}
\put(10,30){\line(1,0){40}}
%
\put(33,10){$\scriptstyle j$}
\put(33,45){$\scriptstyle j$}
\put(10,33){$\scriptstyle k$}
\put(48,33){$\scriptstyle k$}
\put(28,60){$v$}
\put(-5,28){$u$}
\put(60,28){$=\frac{u-v}{c}$}
\put(0,-10){$b$-type vertex}
\end{picture}
}
\put(330,0){%
\begin{picture}(70,70)
\put(30,10){\line(0,1){40}}
\put(10,30){\line(1,0){40}}
%
\put(33,10){$\scriptstyle j$}
\put(33,45){$\scriptstyle k$}
\put(10,33){$\scriptstyle j$}
\put(48,33){$\scriptstyle k$}
\put(28,60){$v$}
\put(-5,28){$u$}
\put(60,28){$=1$}
\put(0,-10){$c$-type vertex}
\end{picture}
}
\end{picture}
\vspace{0.1cm}
\caption{\label{F-abc-types} The nonzero Boltzmann weights $a$, $b$ and $c$ of the six vertex model.}
\end{figure}
The corresponding R-matrix is given by
\ben\label{6vertex}
R_{ab}(u-v)=\frac{u-v}{c}I_{ab}+P_{ab}.
\een
with $I_{ab}$ the identity operator and $P_{ab}$ the permutation operator on $\CC^2\otimes \CC^2$. It is
one of the simplest solution of the Yang-Baxter equation (\ref{YB}) and can be
used to define the Yangian $Y(gl_2)$, as we briefly recall in Appendix \ref{sec:yangian}.

Let us introduce additional operators $A(u)$ and $D(u)$. First consider the two zero determinant 2 by 2 matrices in bi-vector form, where the notations are adopted as in (\ref{vecnot}),
\ben
&&A=|a\>\otimes \<\tilde{a}|=\sum_{ij}\tilde a_ia_jE_{ij},\\
&&D=|d\>\otimes \<\tilde{d}|=\sum_{ij}\tilde d_id_jE_{ij}.
\een
They satisfy $\det(A)=\det(D)=0$  by definition. Then, we have the operators,
\ben\label{modAD}
A(u)=\tr_a(A_aT_a(u))=\<\tilde{a}|T_a(u)|a\>=\sum_{ij}a_{j}\tilde{a}_{i}t_{ij}(u),\\
D(u)=\tr_a(D_aT_a(u))=\<\tilde{d}|T_a(u)|d\>=\sum_{ij}d_{j}\tilde{d}_{i}t_{ij}(u).
\een
Imposing the relations
\ben\label{comconst}
B_aA_bP_{ab}=B_aA_b,\\
P_{ab}B_aD_b=B_aD_b,
\een
which are equivalent to the constraints $\tilde a_j=w_j$ and $ d_j=e_j$, where normalizations were absorbed in the free parameters $a_j$ and $\tilde d_j$, we obtain the following exchange relations,
\ben\label{modRTT}
A(u)B(v)=f(v,u)B(v)A(u)+g(u,v)B(u)A(v),\\
D(u)B(v)=f(u,v)B(v)D(u)+g(v,u)B(u)D(v).
\een
They directly {follow} from the definition of $\{A(u),D(u),B(u)\}$ as linear combinations of the
Yangian generators $t_{ij}(u)$ (see appendix \ref{sec:yangian}), and can also be calculated directly from (\ref{auxformRTT}) multiplying by $A_aB_b$, taking the traces over the spaces $V_a$ and $V_b$, and finally using relations (\ref{comconst}).

From linear combination of the results of  Theorem \ref{th1},
we can find the action of these operators on
the general states (\ref{compass}).
We find the following action of such operators on the general states (\ref{compass}),
\ben\label{rightaction}
&&A(u) |N\>=a_N \lambda_1(u)|N\>+c_N B(u) |N\>,\\
&&D(u) |N\>=d_N \lambda_2(u)|N\>+f_N B(u) |N\>,
\een
with
\ben
a_N={\frac{\<a|\sigma_y|e\>\<w|n\>}{\<n|\sigma_y|e\>}}\,,\quad
c_N={\frac{\<a|\sigma_y|n\>}{\<e|\sigma_y|n\>}}\,,\quad
d_N={\frac{\<e|\sigma_y|n\>\<w|\sigma_y|\tilde d\>}{\<w|n\>}}\,,\quad
f_N=\frac{\<\tilde d|n\>}{\<w|n\>}.
\een
and
\ben\label{lam12}
 \lambda_1(u_j)=h(u_j,\bar v)\,,\quad \lambda_2(u_j)=\frac{1}{g(u_j,\bar v)}\,.
\een
Note that we have the relations
\ben\label{rela1}
a_N+\tr(B)c_N=\tr(A), \quad
d_N+\tr(B)f_N=\tr(D).
\een

Similarly, the action of such operators on the general dual states (\ref{compass}) is given by,
\ben\label{leftaction}
&&\<S|A(u)  =a_S\lambda_2(u)  \<S|+c_S\<S|B(u),\nonumber \\
&&\<S|D(u)  =d_S\lambda_1(u)  \<S|+f_S\<S|B(u) ,
\een
with
\ben
a_S=\frac{\<a|\sigma_y|e\>\<s|\sigma_y|w\>}{\<s|e\>}\,,\quad
c_S=\frac{\<s|a\>}{\<s|e\>}\,,\quad
d_S=\frac{\<s|e\>\<\tilde d|\sigma_y|w\>}{\<s|\sigma_y|w\>}\,,\quad
f_S=\frac{\<s|\sigma_y|\tilde d\>}{\<s|\sigma_y|w\>}.
\een
and the relations
\ben\label{rela2}
 a_S+\tr(B)c_S=\tr(A)\,,\quad
 d_S+\tr(B)f_S=\tr(D).
\een

For further convenience, we introduce the parameter $\beta$ given by,
\ben
\beta=\frac{a_S}{a_N}=\frac{d_N}{d_S}
=\frac{\<w|\sigma_y|s\>\<e|\sigma_y|n\>}{\<e|s\>\<w|n\>}=
-\frac{\tr(B\sigma_y {\hat B}^t \sigma_y )}{\tr(B\hat B)}
\een
that also satisfies,
\ben
\tr(B\sigma_y {\hat B}^t \sigma_y )+\tr(B\hat B)=\tr(B)\tr({\hat B}).
\een

\section{Linear system and modified Izergin determinant formula}\label{sec:4}

In this section we derive a linear system for the partition function following the method
proposed in
\cite{BS19}. The first step is to note that the exchange relations (\ref{modRTT}) imply the following
multiple actions,
\ben\label{mult1}
A(u_i)B(\bar u_i)&=&\sum_{j=1}^{m+1}\frac{f(\bar u_j,u_j)}{h(u_i,u_j)}B(\bar u_j)A(u_j)\,,\\
D(u_i)B(\bar u_i)&=&\sum_{j=1}^{m+1}\frac{f( u_j,\bar u_j)}{h(u_j,u_i)}B(\bar u_j)D(u_j)\,,
\een
for a set $\bar u=\{u_1,\dots,u_{m+1}\}$ which includes an extra parameter $u_{m+1}$.

Next, we act with (\ref{mult1}) on the general states $\{|N\>,\<S|\}$. For convenience, consider the quantities,
\ben
F_A=\<S|(A(u_i)- c_S B(u_i))B(\bar u_i)|N\>\,,\\
F_D=\<S|(D(u_i)- f_S B(u_i))B(\bar u_i)|N\>,
\een
and
compute their actions to the left and to the right taking into account (\ref{leftaction},\ref{rightaction}).
We obtain,
\ben
\label{systemA1}
&& \frac{\tr(B)}{\chi}\sum_{j=1}^{m+1}\Big(-\beta\lambda_2(u_i)\delta_{ij} + \lambda_1(u_j)\frac{f(\bar u_{j},u_j)}{h(u_i,u_j)}\Big )Z_{mn}(\bar u_j|\bar v|B|\Hat B)=Z_{m+1n}(\bar u|\bar v|B|\Hat B)\,,\qquad \\
\label{systemD1}
&&\frac{\tr(B)}{\chi}\sum_{j=1}^{m+1}\Big(\lambda_1(u_i) \delta_{ij}-\beta\lambda_2(u_j)\frac{f( u_{j},\bar u_j)}{h(u_j,u_i)}\Big) Z_{mn}(\bar u_i|\bar v|B|\Hat B)=Z_{m+1n}(\bar u|\bar v|B|\Hat B)\,,\qquad
\een
where
\ben
\chi=1-\beta=\frac{\tr(\hat B)\tr(B)}{\tr(B \hat B)}.
\een
In the case $m=n$ the following theorem was proven in \cite{BSV18},
\begin{theorem}
\ben\label{Zred}
Z_{n+1n}(\bar u|\bar v|B|\Hat B)=\tr(B)\sum_{j=1}^{n+1} g( u_{j},\bar u_{j})\lambda_1( u_{j})\lambda_2(u_{j}) Z_{nn}(\bar u_{j}|\bar v|B|\Hat B)\,.
\een
\end{theorem}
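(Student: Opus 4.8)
The plan is to treat $Z_{n+1,n}(\bar u|\bar v|B|\hat B)$ as a function of the single spectral parameter $x=u_{n+1}$, the remaining variables being fixed, and to pin it down by its analytic structure in $x$. Since $Z_{mn}$ is symmetric in $\bar u$, it suffices to identify the dependence on one variable; symmetry then transfers the result to every $u_j$.

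First I would establish that $Z_{n+1,n}$ is a polynomial of degree $n$ in $x$. This follows from the operator form $Z_{n+1,n}=\<S|B(x)B(\bar u_{n+1})|N\>$ together with the fact that $B(x)=\tr_a(B_aT_a(x|\bar v))$ is built from $n$ factors $R_{ab_j}(x,v_j)=\frac{x-v_j}{c}I+P$, each linear in $x$; hence $B(x)$ is a polynomial of degree $n$ whose leading term is $\frac{x^n}{c^n}\tr(B)$ times the identity on the quantum space. Consequently the leading coefficient of $Z_{n+1,n}$ in $x^n$ is $\frac{\tr(B)}{c^n}Z_{nn}(\bar u_{n+1}|\bar v|B|\hat B)$.

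Next I would analyse the candidate right-hand side $\Phi=\tr(B)\sum_{j=1}^{n+1}g(u_j,\bar u_j)\lambda_1(u_j)\lambda_2(u_j)Z_{nn}(\bar u_j)$. A residue computation at $u_a=u_b$ shows that the apparent simple poles carried by $g(u_a,\bar u_a)$ and $g(u_b,\bar u_b)$ cancel between the $j=a$ and $j=b$ terms, because $Z_{nn}(\bar u_a)$, $Z_{nn}(\bar u_b)$ and the accompanying $\lambda$- and $g$-factors coincide on the diagonal; thus $\Phi$ is a genuine polynomial. Counting degrees in $x$ (only the $j=n+1$ term is of degree $n$, the others being of degree $n-1$) and using $\lambda_1(x)\lambda_2(x)\sim x^{2n}/c^{2n}$ together with $g(x,\bar u_{n+1})\sim c^n/x^n$, I find that $\Phi$ has the same degree $n$ and the same leading coefficient $\frac{\tr(B)}{c^n}Z_{nn}(\bar u_{n+1})$ as $Z_{n+1,n}$. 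Hence $\Delta:=Z_{n+1,n}-\Phi$ is a symmetric polynomial of degree at most $n-1$ in each variable.

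It remains to show $\Delta\equiv 0$. The plan is to produce $n$ further coincidences between $Z_{n+1,n}$ and $\Phi$, for which I would use the inhomogeneity specialisations $x=v_k$, $k=1,\dots,n$. At such a point the vertex $R_{ab_k}(x,v_k)=P_{ab_k}$ degenerates to a permutation, so the extra row decouples through column $k$ and $Z_{n+1,n}|_{x=v_k}$ collapses to an explicit factor times a smaller partition function; matching this against $\Phi|_{x=v_k}$, which is computable from the residue structure above, yields $\Delta|_{x=v_k}=0$. A polynomial of degree at most $n-1$ vanishing at these $n$ points is identically zero in $x$, and symmetry upgrades this to $\Delta\equiv 0$, proving the claim. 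The hard part will be this decoupling step, i.e. the rigorous reduction of $Z_{n+1,n}$ at the special points with fully general boundary covectors $\<w|,|e\>,\<s|,|n\>$: unlike the domain-wall case there is no frozen corner, so the freezing must be extracted from the modified exchange relations (\ref{modRTT}) and the actions (\ref{rightaction})--(\ref{leftaction}). An instructive feature I would keep track of is that, although the linear systems (\ref{systemA1})--(\ref{systemD1}) carry the parameters $\beta$ and $\chi$, the final formula is free of them; verifying that they cancel provides an equivalent way of closing the proof, by taking the combination of the $A$- and $D$-systems in which the $\beta$-dependent pieces annihilate and the clean symmetric sum survives.
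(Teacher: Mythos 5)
First, a point of comparison: the paper does not prove this theorem at all — it is imported from \cite{BSV18} — so your analytic-characterization strategy is a genuinely independent route. Its first half is sound and correctly executed: $Z_{n+1,n}$ is a polynomial of degree $n$ in $x=u_{n+1}$ with leading coefficient $c^{-n}\tr(B)\,Z_{nn}(\bar u_{n+1}|\bar v|B|\Hat B)$; the candidate $\Phi$ is also a polynomial (the poles of the $j=a$ and $j=n+1$ terms at $x=u_a$ cancel by the symmetry of $Z_{nn}$ in $\bar u$) of the same degree with the same leading coefficient; hence $\Delta=Z_{n+1,n}-\Phi$ has degree at most $n-1$ in $x$ and it suffices to exhibit $n$ points of agreement.

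The gap is precisely the step you flag as ``the hard part,'' and it is not a technicality. At $x=v_k$ the vertex $R_{ab_k}(v_k,v_k)=P_{ab_k}$ merely permutes two spaces; with generic $\<w|,|e\>,\<s|,|n\>$ no Boltzmann weight vanishes, no configuration freezes, and $Z_{n+1,n}|_{x=v_k}$ does \emph{not} collapse to an explicit factor times one smaller partition function. On the other side, $\Phi|_{x=v_k}$ loses only the $j=n+1$ term (since $\lambda_2(v_k)=0$) and remains a sum of $n$ partition functions $Z_{nn}$ with one $u_j$ replaced by $v_k$. Matching these two objects is an identity of essentially the same depth as the theorem itself, so the interpolation is not closed. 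Your proposed fallback — combining (\ref{systemA1}) and (\ref{systemD1}) so that $\beta$ cancels — is circular: subtracting the two systems eliminates $Z_{n+1,n}$ and yields only $(L_A-L_D)\vec X=0$, which is strictly weaker than what is needed; to extract the theorem from either system you must already know $L_A\vec X=0$ (or $L_D\vec X=0$), and in the paper those homogeneous systems are \emph{consequences} of this theorem, not inputs. The proof in \cite{BSV18} instead works at the operator level: one conjugates the modified operators back to ordinary Yangian generators (as in the proof of Theorem \ref{th1}) and uses the multiple-action formulas together with the fact that $n+1$ creation operators saturate the $2^n$-dimensional module. To salvage your scheme you would need an independent (e.g.\ inductive) evaluation of $Z_{n+1,n}$ at $n$ special points, and that is where all the work lies.
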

Now, define a vector $\vec{X}=\left(X_1,\dots,X_{m+1}\right)^t$ with $X_j=Z_{nn}(\bar u_{j}|\bar v|B|\Hat B)$.
Using (\ref{Zred}) in (\ref{systemA1},\ref{systemD1}) for $m=n$, we obtain the following homogeneous linear systems,
\ben\label{sys}
L_A\vec{X}=0,\qquad
L_D\vec{X}=0,
\een
where the matrices $L_A,L_D$ with dimension $(n+1)\times (n+1)$ have entries given by
\ben
(L_A)_{ij}=-\beta \lambda_2(u_j)\delta_{ij}+g(u_j,\bar u_j) Y_A(u_j|\bar u_{i}),\\
(L_D)_{ij}=\lambda_1(u_j)\delta_{ij}-g(u_j,\bar u_j) Y_D(u_j|\bar u_{i}),
\een
with
\ben
 &&Y_A(u_j|\bar u_{i})=\tilde h(u_j,\bar u_i)\lambda_1(u_j)-\chi\lambda_1(u_{j})\lambda_2( u_{j}), \\
 &&Y_D(u_j|\bar u_{i})=\beta h( u_{j},\bar u_i)\lambda_2(u_j) +\chi\lambda_1(u_{j})\lambda_2( u_{j}),
\een
where we used (\ref{rela1},\ref{rela2}) for the constants.

Each homogeneous system in (\ref{sys}) has a nontrivial solution if
$\det(L_{A,D})=0$ which implies that $\rank(L_{A,D})\leq n$. To prove that the determinants
vanish, we define the nonsingular $(n+1)\times (n+1)$ matrix $W$ with entries,
\ben
W_{ij}=\frac{g(u_j,\bar u_j)}{g(u_j,\bar w_i)}\,,
\een
where a new set of arbitrary pairwise distinct parameters $\bar w=\{w_1,\dots, w_{ n+1}\}$, that will be
specified later, is introduced. The determinant of $W$ is given by
the ratio of determinants of the Vandermonde type,
\ben\label{WDelta}
\det(W)=\frac{\Delta(\bar w)}{\Delta(\bar u)}, \quad \Delta(\bar u)=(\prod_{i<j}g(u_i,u_j))^{-1}\,.
\een
Now, we define the products $\tilde L_A=WL_A$ and $\tilde L_D=WL_D$. Using the relations,
\ben
\sum_{k=1}^{n+1}W_{ik}=1, \quad \sum_{k=1}^{n+1}W_{ik}\tilde h(u_j,\bar u_{k})=\tilde h(u_j,\bar w_{i}), \quad \sum_{k=1}^{n+1}W_{ik}h(u_j,\bar u_{k})=h(u_j,\bar w_{i}),
\een
that can be proven using  contour integral for appropriate rational functions (see \cite{S22}),
we find,
\ben\label{afterTFA}
(\tilde L_A)_{ij}&=&g(u_j,\bar u_j)
\left(
\,-
\beta\frac{\lambda_2(u_j)}{g(u_j,\bar w_i)}+\lambda_1(u_j)\tilde h(u_j,\bar w_i)-\chi\lambda_1(u_j)
\lambda_2(u_j)
\right),\\ \label{afterTFD}
(\tilde L_D)_{ij}&=&g(u_j,\bar u_j)
\left(
\frac{\lambda_1(u_j)}{g(u_j,\bar w_i)}-\beta
\lambda_2(u_j) h(u_j,\bar w_i)-\chi\lambda_1(u_j)
\lambda_2(u_j)
\right).
\een 
Let $i=n+1$. We set $w_j=v_j-c$ for $j\neq n+1$ in the $A$ system and  $w_j=v_j$ for $j\neq n+1$ in the  $D$ system. We also define $w_{n+1}=w$. Then, we find, respectively,
\ben \label{speafterTFD}
(\tilde L_A)_{n+1j}|_{w_j=v_j-c,w_{n+1}=w}=g(u_j,\bar u_j)\lambda_1(u_j)\lambda_2(u_j)(1-\beta-\chi)=0,\\
( \tilde L_D)_{n+1j}|_{w_j=v_j,w_{n+1}=w}=g(u_j,\bar u_j)\lambda_1(u_j)\lambda_2(u_j)(1-\beta-\chi)=0,
\een
from which it follows that $\det(L_A)={ \det(L_D)}=0$.
For the other rows $1\leq i\leq n$, we have
 \ben
(\tilde L_A)_{ij}=\tilde h(v_i,w) g(u_j,\bar u_j)\lambda_1(u_j)\lambda_2(u_j)\Big(-\beta g(u_j,v_i-c) +g(u_j,v_i)\Big),\\
( \tilde L_D)_{ij}= \frac{g(u_j,\bar u_j)}{g(v_i,w)}\lambda_1(u_j)\lambda_2(u_j)\Big(-\beta g(u_j,v_i-c)+g(u_j,v_i)\Big).
\een
 After renormalization with appropriate diagonal matrices,
 \ben
(\tilde T_A)_{ij}= \frac{\delta_{ij}}{ \tilde h(v_i,w)}\,, \quad
( \tilde T_D)_{ij}= g(v_i,w) \delta_{ij}\,,
\een
we find the same transformed linear system
for both $A$ and $D$ systems, namely,
\ben
(\tilde L)_{ij}=g(u_j,\bar u_j)\lambda_1(u_j)\lambda_2(u_j)
\Big(-\frac{\beta}{h(u_j,v_i)}+ g(u_j, v_i)\Big).
\een
Then, from Cramer's solution of the homogeneous linear system and  relations (\ref{lam12}),
it follows that,
\ben\label{Zdet}
Z_{nn}(\bar u|\bar v|B|\Hat B)=\phi(\bar v)\frac{\det_n(M)}{\det_n(C)} \label{Cramersol}\,.
\een
{ Here $\phi(\bar v)$ is
some symmetric function in $\bar v$, and}
\ben
M_{ij}=(\tilde L)^{(n+1)}_{ij}\frac{g(u_j,\bar v)}{g(u_j,\bar u_j)}=h(u_j,\bar v)\Big(-\beta \frac{1}{h(u_j,v_i)}+ g(u_j, v_i)\Big)
\een
are the entries of the reduced matrix $(\tilde L)^{(n+1)}_{ij}$ corresponding to the matrix
$\tilde L$ with the $n+1$ row and column removed. We also introduced a Cauchy matrix
with elements
\ben
C_{ij}=g(u_i,v_j)\,,
\een
that have the well known determinant,
\ben
\det_n(C)=g(\bar u,\bar v)\Delta(\bar u)\Delta'(\bar v).
\een
{ Here} $\Delta'(\bar u)=(\prod_{i>j}g(u_i,u_j))^{-1}$ which is similar to $\Delta(\bar u)$ in (\ref{WDelta}).

We finally need to fix the normalization in $\bar v$. This is done by
comparing the asymptotic limit $\bar u\rightarrow \infty$ from the
determinant formula (\ref{Zdet}) and from the definition of the partition
function (\ref{ZexpV}). The later follows simply from the asymptotic of the $B(u)$ operator
due to the R-matrix (\ref{6vertex}),
\ben\label{AsympB}
\lim_{ u \to \infty} B(u) &=& \left(\frac{u}{c}\right)^{n}\tr(B)+\cdots \,,
\een
which leads to,
\ben\label{Zlimit}
\lim_{\bar u \to \infty} \Big(Z_{nn}(\bar u|\bar v|B|\Hat B)/(\bar u/c^n)^n \Big)=(\tr(\hat B)\tr(B))^n+\cdots
\een
On the other hand, we have to take the asymptotic limit on the determinant form (\ref{Zdet}).
To do that we use the inverse of the Cauchy determinant, see {\it e.g.} \cite{S22}, given by,
\ben
C^{-1}_{kl}=g(u_l,v_k) \frac{g(\bar v_k,v_k)g( u_{ l},\bar u_{ l})}{g(\bar u,v_k)g( u_l,\bar v)},
\een
that satisfies the following summation rules,
\ben
\sum_{l=1}^nC^{-1}_{{ i}l}g(u_{ l},v_j)=\delta_{ij},
\een
and
\ben
\sum_{l=1}^nC^{-1}_{{ i}l}\frac{1}{h(u_l,v_j)}=\frac{g(v_i,\bar v_i)\tilde{h}(v_j,\bar v_j)}{g(\bar u,v_i)h(\bar u,v_j)h(v_i,v_j)}
\,.
\een
It follows that (\ref{Zdet}) can be put into the form,
\ben
Z_{nn}(\bar u|\bar v|B|\Hat B)= \frac{\phi(\bar v)}{g(\bar u,\bar v)}\det_n\Big(-\beta\delta_{ij}+ \frac{f(\bar u,v_i)f(v_i,\bar v_i)}{h(v_i,v_j)}\Big).
\een
Then, we can take the infinite limit for the set $\bar u$ and find
\ben
\lim_{\bar u \to \infty} Z/(\bar u/c^n)^n=\phi(\bar v) \det_n \Big( -\beta\delta_{ij}+\frac{f(v_j,\bar v_j)}{h(v_{i},v_{j})}\Big)+\cdots
\een
It follows from (\ref{oK-sumpart}) and Lemma 4.1 in \cite{BS18}, namely, from the identity
\ben
\sum f(\bar u_{\qo},\bar u_{\qt})=\left(\begin{array}{c}n \\ p \end{array}\right),
\een
where the sum is taken over partition of the set $\bar u$ into two sets $\{\bar u_{\qo},\bar u_{\qt}\}$ with fixed size  $\bar u_{\qo}=p$, $\bar u_{\qt}=n-p$. From Newton binomial formula it follows
\ben\label{Zlimit2}
\lim_{\bar u \to \infty} Z/(\bar u/c^n)^n=\phi(\bar v)\chi^n+\cdots\,.
\een
{ Comparing} (\ref{Zlimit2}) with (\ref{Zlimit}) we find,
\ben
\phi(\bar v) =\Big(\frac{\tr(\hat B)\tr(B)}{\chi}\Big)^n,
\een
that leads to
\ben\label{Znn}
Z_{nn}(\bar u|\bar v|B|\Hat B)=\frac{\tr(B)^n\tr(\hat B)^n}{\chi^n}\lambda_2(\bar u)K^{(\beta)}_{nn}(\bar u|\bar v),
\een
where $K^{(\beta)}_{nn}$ is the modified Izergin determinant (\ref{defKdef1}) for $m=n$, with $\lambda_2(\bar u)=(g(\bar u, \bar v))^{-1}$.

Finally, using the limits of the modified Izergin determinant in Proposition (\ref{SOP}), the formula
(\ref{Znn}) can be extended for values of $m\neq n$. For $m>n$, we use the limit (\ref{limKv})
to eliminate one $v_j$, while for $n>m$ we use the limit (\ref{limKu}) one $u_j$. Comparing with operator
 form (\ref{ZexpV}), we can find the general case $m,n$,
\ben\label{ZmodIKmn}
Z_{mn}(\bar u|\bar v|B|\Hat B)=\frac{\tr(B)^m \tr(\hat B)^n}{\chi^n}\lambda_2(\bar u)K^{(\beta)}_{mn}(\bar u|\bar v),
\een
that can be expressed
in two possible forms,
\be{defKdef1}
Z_{mn}(\bar u|\bar v|B|\Hat B)=\frac{\tr(B)^m \tr(\hat B)^n}{\chi^n} \lambda_2(\bar u)\det_n\left(-\beta\delta_{jk}+\frac{f(\bu,v_j)f(v_j,\bv_j)}{h(v_j,v_k)}\right),
\ee
and%
\be{defKdef2}
Z_{mn}(\bar u|\bar v|B|\Hat B)=\frac{\tr(B)^m \tr(\hat B)^n}{\chi^m}\lambda_2(\bar u)\det_m\left(\delta_{jk}f(u_j,\bv)-\beta \frac{f(u_j,\bu_j)}{h(u_j,u_k)}\right).
\ee
Furthermore, due to the properties of the modified Izergin determinant
in Proposition (\ref{Sform}),
we have the additional presentations as sum over partitions of the set $\bar v$,
\ben
Z_{mn}(\bar u|\bar v|B|\Hat B)=
\frac{\tr(B)^m \tr(\hat B)^n}{\chi^n} \lambda_2(\bar u) \sum_{\bv\Rightarrow\{\bv_{\so},\bv_{\st}\}}(-\beta)^{\#\bv_{\st}} f(\bu,\bv_{\so})f(\bv_{\so},\bv_{\st}).
\een
or over the set $\bar u$,
\ben
Z_{mn}(\bar u|\bar v|B|\Hat B)=
\frac{ \tr(B)^m \tr(\hat B)^n}{\chi^m}\lambda_2(\bar u) \sum_{\bu\Rightarrow\{\bu_{\so},\bu_{\st}\}}(-\beta)^{\#\bu_{\so}} f(\bu_{\st},\bv)f(\bu_{\so},\bu_{\st}).
\een

\section{Conclusion}\label{sec:conc}

Using the triangular representation theory of the modified operators (\ref{modB},\ref{modAD})
together with the exchange relations (\ref{mult1}), we found linear systems
that characterize, up to normalization, the partition function of the inhomogeneous rational six vertex model
on the rectangular lattice with two arbitrary twists $\{B,\hat B\}$ of rank $\rank(B)=\rank(\hat B)=1$.
Due to symmetry, only one of the eight generic boundary parameters associated
with the compass states (\ref{compass}) is free, that is, the partition function can be written
solely in terms of the parameter $\beta$, up to normalization by $\tr(B)^m\tr(\hat B)^n$.

The solution of the linear system and therefore the partition function on the rectangular lattice is given by a modified Izergin determinant (\ref{ZmodIKmn}).
For specific
values of the parameters,
one can recover known examples in the literature, for example, the partial domain wall partition function \cite{Foda2012} or the Izergin-Korepin determinant ($m=n$,  $s_1=0,s_2=1,e_1=0,e_2=1$ and $n_1=1,n_2=0,w_1=1,w_2=0$) \cite{Korepin1982,Izergin1987}.

For future problems, it is interesting to consider the linear system approach to the computation
of the partition function of the six vertex model with reflecting boundary conditions,
related to the reflection algebra \cite{Skly88}, which has been
extensively studied \cite{Tsuchiya1998,FiKi10,Lamers2018,FoZa16,PiPoVe17,Hietala2022}. Also in
this context, the linear system approach
may find applications in the connection between certain scalar products of modified Bethe states
and q-Racah polynomials \cite{BaPi23}.

Another intriguing problem is to study the trigonometric six vertex model (associated with
the XXZ chain) under anti-periodic boundary conditions.
The first step here would be to find appropriate modified operators and
off-shell relations, which are not yet known for this model, despite recent developments
(see
\cite{Zhang:2014fgb,NiTe15,ODBA} and references
therein).

It is also important to investigate higher rank vertex models, initially those based
on the $sl(n)$ algebra \cite{Reshetikhin1989}. In this case, the off-shell relations
and, therefore, the associated linear systems
are more intricate (see \cite{BPRS12,BPRS12b,LiSl18}).


\section*{Acknowledgements} S.B. { thanks} F. Levkovich-Maslyuk, R. Frassek, J. Jacobsen, V. Pasquier, D. Serban,  V. Terras for discussions. R.A.P. thanks R. Frassek for discussions and acknowledges support by the
German Research Council (DFG) via the Research Unit FOR
2316.

\appendix

\section{Basics about the Yangian  $Y(gl_2)$} \label{sec:yangian}
In this appendix, we review some basic facts about the Yangian of $gl_2$ denoted $Y(gl_2)$ (see
{\it e.g.} the
{ monograph} \cite{Chari,Molev2007} for details), as formulated in the quantum inverse scattering method. Let us introduce the monodromy matrix
\ben\label{MonoT}
T(u)=\left(\begin{array}{ll} t_{11}(u)& t_{12}(u)\\ t_{21}(u)&t_{22}(u)\end{array}\right),
\een
whose elements are given by the formal series in $u$,
\ben
 t_{ij}(u)=\delta_{ij}+\sum_{r=1}^\infty t_{ij}^{(r)}u^{-r},
\een
where $t_{ij}^{(r)}$ are the generators of the Yangian $Y(gl_2)$ subject to
the defining RTT relations
\ben\label{RTTa}
R_{ab}(u-v)T_a(u)T_b(v)= T_b(v)T_a(u)R_{ab}(u-v),
\een
where
$R$ is
the six vertex R-matrix (\ref{6vertex}). They encode
exchange relations for the elements $t_{ij}(u)$, for instance,
\ben
t_{11}(u)t_{12}(v)&=&f(v,u)t_{12}(v)t_{11}(u)+g(u,v)t_{12}(u)t_{11}(v),\\
t_{22}(u)t_{12}(v)&=&f(u,v)t_{12}(v)t_{22}(u)+g(v,u)t_{12}(u)t_{22}(v),\\
t_{12}(u)t_{12}(v)&=&t_{12}(v)t_{12}(u),
\een
where the rational functions $f,g$ in the formal variables $u,v$ are given by (\ref{funcs}).

Let us note that (\ref{RTTa}) can be rewritten as
\ben\label{auxformRTT}
[T_a(u),T_b(v)]= g(v,u)P_{ab}(T_a(u)T_b(v)-T_a(v)T_b(u)).
\een

We consider finite dimensional representations of the Yangian
\cite{Tarasov1985} from (\ref{monoT}). Highest and lowest weight representations can be respectively constructed
from vectors $|0\>$ and $|\hat 0\>$ with actions,
\ben\label{raction}
T_a(u)|0\>=\left(\begin{array}{cc}
       \lambda_1(u) & t_{12}(u)\\
      0 & \lambda_2(u)      \end{array}\right)_a|0\>,
\een
\ben\label{ractionh}
T_a(u)|\hat 0\>=\left(\begin{array}{cc}
       \lambda_2(u) & 0\\
       t_{21}(u) & \lambda_1(u)      \end{array}\right)_a|\hat 0\>.
\een
The dual analogs $\<0|$ and $\<\hat 0|$ have actions,
\ben\label{laction}
\<0|T_a(u)=\<0|\left(\begin{array}{cc}
       \lambda_1(u) & 0\\
      t_{21}(u) & \lambda_2(u)      \end{array}\right)_a,
\een
\ben\label{lactionh}
\<\hat 0|T_a(u)=\<\hat 0|\left(\begin{array}{cc}
       \lambda_2(u) & t_{12}(u)\\
      0 & \lambda_1(u)      \end{array}\right)_a.
\een

In particular, for the $6$ vertex case, we have the highest and lowest vectors,
\ben
|0\>&=&\otimes_{j=1}^n\left(\begin{array}{c}  1 \\0\end{array}\right),\quad
\<0|=\otimes_{j=1}^n\left(\begin{array}{cc}  1, & 0\end{array}\right),
\\
|\hat 0\>&=&\otimes_{j=1}^n\left(\begin{array}{c}  0 \\1\end{array}\right),
\quad
\<\hat 0|=\otimes_{j=1}^n\left(\begin{array}{cc}  0, & 1\end{array}\right),
\een
and the weight functions $\lambda_i(u)$ are given by (\ref{lam12}).

The following theorem provides the modified version of the actions
(\ref{raction},\ref{ractionh},\ref{laction},\ref{lactionh}).

\begin{theorem}\label{th1}
The action of the entries of the monodromy matrix on an arbitrary vector
\ben
|X\>=|x\>\otimes\dots \otimes |x\>, \quad \text{with} \quad |x\>=\Big(\begin{array}{c}  x_1 \\x_2\end{array}\Big),
\een
with $x_1\neq 0$
are given by
\ben
&&t_{11}(u) |X\>=\lambda_1(u) |X\>-\frac{x_2}{x_1} t_{12}(u) |X\>,\\
&&t_{22}(u) |X\>=\lambda_2(u) |X\>+\frac{x_2}{x_1} t_{12}(u) |X\>,\\
&&t_{21}(u) |X\>=\frac{x_2}{x_1}(\lambda_1(u)-\lambda_2(u) )|X\> -\Big(\frac{x_2}{x_1}\Big)^2t_{12}(u) |X\>,
\een
and with $x_2\neq 0$
are given by
\ben
&&t_{11}(u) |X\>=\lambda_2(u) |X\>+\frac{x_1}{x_2} t_{21}(u) |X\>,\\
&&t_{22}(u) |X\>=\lambda_1(u) |X\>-\frac{x_1}{x_2} t_{21}(u) |X\>,\\
&&t_{{12}}(u) |X\>=\frac{x_1}{x_2}(\lambda_{ 1}(u)-\lambda_{ 2}(u) )|X\>{ -}\Big(\frac{x_1}{x_2}\Big)^2t_{{21}}(u) |X\>.
\een
Similarly, we can find the action on the dual vector
\ben
\<X|= \<x|\otimes\dots \otimes \<x|,  \quad \text{with} \quad \<x|=(\begin{array}{cc} x_1,& x_2\end{array}),
\een
with $x_1\neq 0$
\ben
&&\<X|t_{11}(u) =\lambda_1(u) \<X|-\frac{x_2}{x_1}\<X| t_{21}(u) ,\\
&&\<X|t_{22}(u) =\lambda_2(u) \<X|+\frac{x_2}{x_1} \<X|t_{21}(u), \\
&&\<X|t_{12}(u) =\frac{x_2}{x_1}(\lambda_1(u)-\lambda_2(u) )\<X|{-}\Big(\frac{x_2}{x_1}\Big)^2\<X|t_{21}(u),
\een
with $x_2\neq 0$
\ben
&&\<X|t_{11}(u) =\lambda_2(u) \<X|{ +}\frac{x_1}{x_2}\<X| t_{12}(u), \\
&&\<X|t_{22}(u) =\lambda_1(u) \<X|{ -}\frac{x_1}{x_2} \<X|t_{12}(u) ,\\
&&\<X|t_{21}(u) =\frac{x_1}{x_2}(\lambda_{ 1}(u)-\lambda_{2}(u) )\<X|-\Big(\frac{x_1}{x_2}\Big)^2\<X|t_{12}(u).
\een
\end{theorem}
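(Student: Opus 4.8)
The plan is to reduce the action on the uniform product vector $|X\rangle=|x\rangle^{\otimes n}$ to the already established highest and lowest weight actions (\ref{raction})--(\ref{lactionh}) by exploiting the $GL(2)$ gauge covariance of the six vertex R-matrix. The starting observation is that, since $R_{ab}(u-v)=\tfrac{u-v}{c}I_{ab}+P_{ab}$ with both $I_{ab}$ and the permutation $P_{ab}$ commuting with $g_ag_b$ for every invertible $2\times2$ matrix $g$, one has $[R_{ab}(u-v),g_ag_b]=0$. First I would promote this local symmetry to a global one: writing $T_a(u)=\overrightarrow{\prod_j}R_{ab_j}(u,v_j)$ and $G=g^{\otimes n}$ on the quantum space, each factor $R_{ab_j}$ commutes with $g_ag_{b_j}$ and (trivially) with the $g_{b_k}$, $k\neq j$, hence with $g_aG$. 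Consequently $g_aG$ commutes with the whole product $T_a(u)$, which yields the gauge identity $g_aT_a(u)g_a^{-1}=G^{-1}T_a(u)G$: conjugating the auxiliary matrix by $g$ is the same as conjugating all operator entries $t_{ij}(u)$ by $G^{-1}$ in the quantum space.

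Next, for the branch $x_1\neq0$ I would take the lower triangular gauge $g=\left(\begin{smallmatrix}1&0\\ \xi&1\end{smallmatrix}\right)$ with $\xi=x_2/x_1$, so that $g\left(\begin{smallmatrix}1\\0\end{smallmatrix}\right)\propto|x\rangle$ and hence $|X\rangle\propto G\rvec$. Expanding $g_aT_a(u)g_a^{-1}$ produces the entries $\hat t_{ij}(u)$ as explicit linear combinations of the $t_{kl}(u)$; in particular the $(1,2)$ entry is gauge invariant, $\hat t_{12}=t_{12}$, which via the gauge identity means that $t_{12}(u)$ commutes with $G$. Since $t_{ij}(u)=G\,\hat t_{ij}(u)\,G^{-1}$ and $|X\rangle\propto G\rvec$, we have $t_{ij}(u)|X\rangle\propto G\,\hat t_{ij}(u)\rvec$. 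Evaluating $\hat t_{ij}(u)\rvec$ with the highest weight action (\ref{raction}) (so that $t_{11}\rvec=\lambda_1\rvec$, $t_{22}\rvec=\lambda_2\rvec$, $t_{21}\rvec=0$) gives combinations of $\rvec$ and $t_{12}(u)\rvec$; pushing the external $G$ through $t_{12}$ and using $G\rvec\propto|X\rangle$ then returns expressions in $|X\rangle$ and $t_{12}(u)|X\rangle$, the scalar prefactors cancelling by homogeneity, and the three stated formulas for $t_{11},t_{22},t_{21}$ drop out.

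The branch $x_2\neq0$ is handled symmetrically: I would use the upper triangular gauge $g=\left(\begin{smallmatrix}1&\eta\\0&1\end{smallmatrix}\right)$ with $\eta=x_1/x_2$, start from the lowest weight vector $\rvech$ and its action (\ref{ractionh}), and observe that now $t_{21}(u)$ is the gauge invariant entry commuting with $G$; this interchanges the roles of $(t_{12},\lambda_1)$ and $(t_{21},\lambda_2)$ and yields the second set of identities. The dual statements follow verbatim with the same two gauges acting on the right, using the dual actions (\ref{laction}) and (\ref{lactionh}). The only genuinely delicate points are establishing the global gauge identity and tracking the scalar normalizations relating $|X\rangle$ to $G\rvec$; once these are fixed the rest is a short piece of $2\times2$ linear algebra. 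A more pedestrian alternative would bypass the gauge picture and prove all four branches by induction on $n$ directly from the RTT relations (\ref{RTTa}), but the gauge argument is cleaner and makes transparent why only the single ratio $x_2/x_1$ (resp.\ $x_1/x_2$) enters the result.
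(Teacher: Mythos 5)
Your proposal is correct and follows essentially the same route as the paper: both exploit the global $gl(2)$ invariance $[T_a(u),g_a\prod_i g_{b_i}]=0$ to conjugate the known triangular actions on $|0\>$ and $|\hat 0\>$ into actions on $|X\>$. The only difference is that you specialize the gauge matrix to a unipotent triangular $g$ (rather than the paper's general invertible $X$ with $X|0\>=|x\>$), which makes the off-diagonal generator $t_{12}$ (resp.\ $t_{21}$) literally commute with $G$ and thereby renders trivial the back-substitution step the paper dismisses as ``some linear algebra.''
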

\begin{proof}
These relations can be proven in the following way. Let us introduce an invertible matrix $X$ such that $X|0\>= |x\>$ with the form,
\ben
X=\left(\begin{array}{ll} x_1 & \alpha \\ x_2 & \beta\end{array} \right).
\een

Then due to the $gl(2)$ invariance of the Yangian we have $[T_a(u),X_a\prod_{i=1}^nX_{b_i}]=0$ and it follows that,
\ben
T_a(u)|X\>&=&T_a(u)\big(\prod_{i=1}^nX_{b_i}\big)|0\>=T_a(u)X_a\big(\prod_{i=1}^nX_{b_i}\big)(X_a)^{-1}|0\>\\
&=&X_a\big(\prod_{i=1}^nX_{b_i}\big)T_a(u)|0\>(X_a)^{-1}.
\een
From the representation theory of the Yangian (\ref{raction},\ref{ractionh}), we have,
\ben
T_a(u)|0\>=\left(\begin{array}{cc}
       \lambda_1(u) &0\\
      0 & \lambda_2(u)      \end{array}\right)_a|0\>+\left(\begin{array}{cc}
       0&1\\
      0 &0    \end{array}\right)_a t_{12}(u)|0\>.
\een
It follows that,
\ben
T_a(u)|X\>&=&\Lambda_a(u)|X\>+E_a \big(\prod_{i=1}^NX_{b_i}\big)t_{12}(u)|0\>,
\een
with
\ben
\det(X)\Lambda_a(u)&=&X_a\left(\begin{array}{cc}
       \lambda_1(u) &0\\
      0 & \lambda_2(u)      \end{array}\right)_a(X_a)^{-1}\\
      &=&\left(\begin{array}{cc}
     x_1 \beta  \lambda_1(u)-  x_2 \alpha \lambda_2(u)  &  { -}x_{ 1} \alpha  \big( \lambda_1(u)- \lambda_2(u) \big )\\
    x_{2} \beta \big( \lambda_1(u)- \lambda_2(u)  \big) &  x_1 \beta  \lambda_2(u)-  x_2 \alpha \lambda_1(u)    \end{array}\right)_a,
\een
and
\ben
\det(X)E_a&=&X_a\left(\begin{array}{cc}
       0&1\\
      0 &0    \end{array}\right)_a(X_a)^{-1}=\left(\begin{array}{cc}
    - x_1  x_2   &  (x_1)^2 \\
  -(x_2)^2  &  x_1 x_2  \end{array}\right)_a\,.
\een
Then after some linear algebra we express $\big(\prod_{i=1}^NX_{b_i}\big)t_{12}(u)|0\>$ in terms of $t_{12}(u)|X\>$ and $|X\>$ we find the desired actions. The left actions can be proven similarly and
we omit it here.

\end{proof}

\section{Modified Izergin determinant}\label{sec:mi}

We recall some basic properties of the modified Izergin determinant (see \cite{BSV18a,BS19a} for more details).
The modified Izergin determinant can be {defined as follows.}
\begin{Def}
Let $\bu=\{u_1,\dots,u_m\}$,  $\bv=\{v_1,\dots,v_n\}$ and $z$ be a complex number.
Then  the modified Izergin determinant $K_{mn}^{(z)}(\bu|\bv)$
is defined by
\be{defKdef1}
K_{mn}^{(z)}(\bu|\bv)=\det_n\left(-z\delta_{jk}+\frac{f(\bu,v_j)f(v_j,\bv_j)}{h(v_j,v_k)}\right).
\ee
Alternatively the modified Izergin determinant can be presented as
\be{defKdef2}
K_{mn}^{(z)}(\bu|\bv)=(1-z)^{n-m}\det_m\left(\delta_{jk}f(u_j,\bv)-z\frac{f(u_j,\bu_j)}{h(u_j,u_k)}\right).
\ee
\end{Def}
The proof of the equivalence of representations \eqref{defKdef1} and \eqref{defKdef2} can be found in \cite{GZZ}.

In the particular case $z=1$ and $\#\bu=\#\bv=n$, the modified Izergin determinant
turns into the ordinary Izergin determinant, that we traditionally denote by $K_{n}(\bu|\bv)$,
\be{ModI-OrdI}
K_{nn}^{(1)}(\bu|\bv)=K_{n}(\bu|\bv).
\ee
It also follows
from \eqref{defKdef2} that
\be{K10}
K_{mn}^{(1)}(\bu|\bv)=0,\qquad \text{for} \qquad m<n.
\ee

Additional properties of the modified Izergin determinant needed here are given in the following propositions.

\begin{prop}\label{SOP}
\vspace{1mm}
We have the limits,
\be{limKv}
\lim_{v_j\to \infty}K^{(z)}_{mn}(\bu|\bv)=(1-z)K^{(z)}_{mn-1}(\bu|\bv_j),
\ee
\be{limKu}
\lim_{u_j\to \infty}K^{(z)}_{mn}(\bu|\bv)=K^{(z)}_{m-1n}(\bu_j|\bv).
\ee
\end{prop}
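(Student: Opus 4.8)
The plan is to compute both limits directly from the determinant representations, choosing whichever of \eqref{defKdef1} or \eqref{defKdef2} makes the asymptotics cleanest for each case. For the limit \eqref{limKv} as $v_j\to\infty$, I would work with the $n\times n$ form \eqref{defKdef1}, whose matrix entries are $-z\delta_{jk}+f(\bu,v_j)f(v_j,\bv_j)/h(v_j,v_k)$. The key observation is how each factor behaves as one fixed column parameter, say $v_j$, tends to infinity. From the definitions in \eqref{funcs}, we have $f(u,v)=(u-v+c)/(u-v)\to 1$ and $h(u,v)=(u-v+c)/c\to\infty$ linearly in its second slot sent to infinity, while $g\to 0$. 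First I would isolate the $j$-th row and $j$-th column of the matrix, since $v_j$ appears there in a distinguished way, and track which entries survive, vanish, or blow up.

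The main computation is then a careful bookkeeping of the limit row/column by row/column. In the $j$-th row the entries $f(\bu,v_j)\to 1$ and $f(v_j,\bv_j)\to 1$, while the off-diagonal $1/h(v_j,v_k)\to 0$ for $k\neq j$; on the diagonal the term $-z$ survives, so the $j$-th row degenerates to essentially $-z$ times the standard basis vector $e_j$ (up to the vanishing off-diagonal contributions). In the $j$-th \emph{column}, the factor $1/h(v_k,v_j)\to 0$ for $k\neq j$, so all off-diagonal entries of that column vanish and only the diagonal $-z$ term remains. Expanding the determinant along the $j$-th row (or column), the limit therefore factorizes as $(-z)$ times the $(n-1)\times(n-1)$ minor obtained by deleting row and column $j$. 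The surviving minor is exactly $\det_{n-1}(-z\delta+ f(\bu,v)f(v,\bv)/h)$ over the reduced set $\bv_j$, which by \eqref{defKdef1} equals $K^{(z)}_{m\,n-1}(\bu|\bv_j)$. What requires care is confirming that the prefactor comes out as $(1-z)$ rather than $(-z)$: the subtlety is that the $f(v_j,\bv_j)$ factors in the \emph{other} rows $k\neq j$ also involve $v_j$, since $\bv_j$ is the complement of $v_k$ and contains $v_j$. Tracking these gives $f(v_k,v_j)\to 1$, which is harmless, but one must verify the combination of the $\delta_{jk}$ diagonal piece against the surviving rank-one structure produces the stated coefficient $1-z$ and not simply $-z$.

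For the limit \eqref{limKu} as $u_j\to\infty$, the natural choice is the $m\times m$ form \eqref{defKdef2}, where $u_j$ sits inside the entries $\delta_{jk}f(u_j,\bv)-z\,f(u_j,\bu_j)/h(u_j,u_k)$ together with the explicit prefactor $(1-z)^{n-m}$. As $u_j\to\infty$ one has $f(u_j,\bv)\to 1$ and $f(u_j,\bu_j)\to 1$, while $1/h(u_j,u_k)\to 0$ for $k\neq j$, so exactly the analogous row/column collapse occurs, reducing the determinant from size $m$ to size $m-1$ over the set $\bu_j$; the prefactor shifts from $(1-z)^{n-m}$ to $(1-z)^{n-(m-1)}$, matching $K^{(z)}_{m-1\,n}(\bu_j|\bv)$ once the leftover factors are absorbed. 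I expect the main obstacle in both cases to be the same: keeping the dependence on the vanishing parameter separated from the dependence of the \emph{other} entries on that same parameter (through the bar-complement products $f(v_j,\bv_j)$ or $f(u_j,\bu_j)$), so that the cofactor expansion cleanly yields the reduced modified Izergin determinant with the correct power of $(1-z)$. A clean way to sidestep the asymmetry between the two limits is to note that one limit is most transparent in \eqref{defKdef1} and the other in \eqref{defKdef2}, and the factor $(1-z)^{n-m}$ relating the two representations is precisely what accounts for the discrepancy between the prefactors $(1-z)$ and $1$ in \eqref{limKv} versus \eqref{limKu}.
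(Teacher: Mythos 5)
The paper itself does not prove Proposition \ref{SOP} (it defers to the cited references), so I am judging your argument on its own terms. Your overall strategy --- take the entrywise limit in a determinant representation and expand along the distinguished row and column --- does work, but there is a concrete error in the key step, and it sits exactly where you yourself flagged uncertainty. In the representation \eqref{defKdef1} the $(j,j)$ entry is $-z+f(\bu,v_j)f(v_j,\bv_j)/h(v_j,v_j)$, and since $h(v_j,v_j)=c/c=1$ this second term does \emph{not} vanish as $v_j\to\infty$: it tends to $1\cdot 1/1=1$. Hence the distinguished row degenerates to $(1-z)\,e_j$, not to $-z\,e_j$ as you assert, and the cofactor expansion immediately produces the prefactor $1-z$. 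Your proposed explanation for the discrepancy --- the factors $f(v_k,\bv_k)$ in the other rows --- is a red herring, as you essentially noticed (those tend to $1$ and are harmless); the missing $+1$ lives on the diagonal of row $j$ itself. The same oversight infects your treatment of \eqref{limKu} via \eqref{defKdef2}: there the $(j,j)$ entry $f(u_j,\bv)-z f(u_j,\bu_j)/h(u_j,u_j)\to 1-z$, which is the actual reason the prefactor ``shifts'' from $(1-z)^{n-m}$ to $(1-z)^{n-(m-1)}$; as written, your collapse to $-z\,e_j$ would give $(1-z)^{n-m}(-z)$ instead, contradicting the claimed result. Once the diagonal entry is corrected, both cofactor arguments close without further difficulty.

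A remark on economy: you chose, in each limit, precisely the representation in which the limiting variable labels a row and column, which is what forces the delicate expansion. The opposite pairing trivializes both statements. For $u_j\to\infty$ use \eqref{defKdef1}: the variable $u_j$ enters only through the spectator product $f(\bu,v_k)\to f(\bu_j,v_k)$, so the whole matrix converges entrywise to that of $K^{(z)}_{m-1\,n}(\bu_j|\bv)$ with no expansion and no prefactor. For $v_j\to\infty$ use \eqref{defKdef2}: the variable $v_j$ enters only through $f(u_i,\bv)\to f(u_i,\bv_j)$ on the diagonal, so
\[
\lim_{v_j\to\infty}K^{(z)}_{mn}(\bu|\bv)=(1-z)^{n-m}\det_m\left(\delta_{ik}f(u_i,\bv_j)-z\frac{f(u_i,\bu_i)}{h(u_i,u_k)}\right)=(1-z)\,K^{(z)}_{m\,n-1}(\bu|\bv_j),
\]
where the last equality is just the bookkeeping $(1-z)^{n-m}=(1-z)\cdot(1-z)^{(n-1)-m}$. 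This makes your closing heuristic --- that the factor $(1-z)^{n-m}$ relating the two representations accounts for the asymmetry between the two limits --- into an actual proof.
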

\begin{prop}\label{Sform}
We have the sum formulation,
\be{K-sumpart}
K_{mn}^{(z)}(\bu|\bv)=\sum_{\bv\Rightarrow\{\bv_{\so},\bv_{\st}\}}(-z)^{\#\bv_{\st}} f(\bu,\bv_{\so})f(\bv_{\so},\bv_{\st}),
\ee
where the sum is taken over all partitions $\bv\Rightarrow\{\bv_{\so},\bv_{\st}\}$, and
\be{oK-sumpart}
\begin{aligned}
&K_{m,n}^{(z)}(\bu|\bv)=(1-z)^{n-m}\sum_{\bu\Rightarrow\{\bu_{\so},\bu_{\st}\}}(-z)^{\#\bu_{\so}} f(\bu_{\st},\bv)f(\bu_{\so},\bu_{\st}),
\end{aligned}
\ee
where the sum is taken over all partitions $\bu\Rightarrow\{\bu_{\so},\bu_{\st}\}$.
\end{prop}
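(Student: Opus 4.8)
The plan is to obtain each sum formula directly from the corresponding determinant representation of $K_{mn}^{(z)}$, in both cases by a multilinear expansion of the determinant over subsets of indices, after which the surviving principal minors are evaluated by a single Cauchy-type identity. Since \eqref{defKdef1} and \eqref{defKdef2} are already known to be equal (by \cite{GZZ}), it suffices to derive \eqref{K-sumpart} from \eqref{defKdef1} and \eqref{oK-sumpart} from \eqref{defKdef2}; the two derivations are mirror images of one another under $\bu\leftrightarrow\bv$.

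For \eqref{K-sumpart} I would write the $n\times n$ matrix in \eqref{defKdef1} as $-zI+B$, where $B_{jk}=a_j/h(v_j,v_k)$ and $a_j=f(\bu,v_j)f(v_j,\bv_j)$ (since $h(v_j,v_j)=1$ the diagonal is correctly $B_{jj}=a_j$). Writing each column as $-z\,e_k+B_{\cdot k}$ and expanding by multilinearity gives
\[
\det(-zI+B)=\sum_{S}(-z)^{\,n-|S|}\det\big(B[S]\big),
\]
the sum running over all $S\subseteq\{1,\dots,n\}$, with $B[S]$ the principal submatrix on $S$. Setting $\bv_{\so}=\{v_j:j\in S\}$ and $\bv_{\st}=\{v_j:j\notin S\}$ turns $(-z)^{\,n-|S|}$ into $(-z)^{\#\bv_{\st}}$, which is the partition structure of \eqref{K-sumpart}. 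Pulling $a_j$ out of the $j$-th row of $B[S]$ leaves $\det(B[S])=\big(\prod_{j\in S}a_j\big)\det\big(1/h(v_j,v_k)\big)_{j,k\in S}$.

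The one nontrivial ingredient is the Cauchy determinant identity
\[
\det\Big(\tfrac{1}{h(v_j,v_k)}\Big)_{j,k\in S}=\prod_{j,k\in S,\,j\neq k}\frac{1}{f(v_j,v_k)},
\]
which I would establish by writing $1/h(v_j,v_k)=c/(v_j-v_k+c)$, applying the classical Cauchy formula with $x_j=v_j$ and $y_k=v_k-c$, and simplifying with \eqref{prop-funcs}. On the other side, splitting $f(v_j,\bv_j)$ into its $\bv_{\so}$ and $\bv_{\st}$ parts yields $\prod_{j\in S}a_j=f(\bu,\bv_{\so})\,f(\bv_{\so},\bv_{\st})\prod_{j,k\in S,\,j\neq k}f(v_j,v_k)$. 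The internal products $\prod_{j\neq k}f(v_j,v_k)$ then cancel against the Cauchy determinant, collapsing each principal minor to $\det(B[S])=f(\bu,\bv_{\so})\,f(\bv_{\so},\bv_{\st})$ and producing exactly \eqref{K-sumpart}.

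For \eqref{oK-sumpart} I would run the identical argument on \eqref{defKdef2}, now writing the $m\times m$ matrix as $\mathrm{diag}\big(f(u_j,\bv)\big)-zB'$ with $B'_{jk}=f(u_j,\bu_j)/h(u_j,u_k)$. Multilinear expansion in the columns---with $T$ the set of columns contributing their $B'$ part---gives $\sum_{T}\big(\prod_{k\notin T}f(u_k,\bv)\big)(-z)^{|T|}\det\big(B'[T]\big)$ for this determinant; with $\bu_{\so}=\{u_k:k\in T\}$ and $\bu_{\st}=\{u_k:k\notin T\}$ this is $\sum_{\bu\Rightarrow\{\bu_{\so},\bu_{\st}\}}(-z)^{\#\bu_{\so}}f(\bu_{\st},\bv)\det\big(B'[T]\big)$. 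The same Cauchy evaluation and cancellation reduce $\det(B'[T])$ to $f(\bu_{\so},\bu_{\st})$, and multiplying through by the explicit factor $(1-z)^{n-m}$ from \eqref{defKdef2} yields \eqref{oK-sumpart}. I expect the only genuine difficulty to be the bookkeeping in the Cauchy step and in checking that the internal $f$-products cancel cleanly; everything else is the routine multilinear expansion of a determinant.
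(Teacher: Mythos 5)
Your proof is correct. Note that the paper itself does not prove Proposition \ref{Sform}: it is recalled as a known property of the modified Izergin determinant with a pointer to \cite{BSV18a,BS19a}, so there is no in-paper argument to compare against. Your route --- the multilinear (principal-minor) expansion $\det(-zI+B)=\sum_{S}(-z)^{n-|S|}\det(B[S])$ followed by the shifted Cauchy evaluation $\det\bigl(1/h(v_j,v_k)\bigr)_{j,k\in S}=\prod_{j\neq k\in S}f(v_j,v_k)^{-1}$ --- is the standard one for identities of this type, and all the bookkeeping checks out: $h(v_j,v_j)=1$ makes the diagonal of $B$ come out right, the split of $f(v_j,\bv_j)$ into its $\bv_{\so}$ and $\bv_{\st}$ parts produces exactly the internal product $\prod_{j\neq k\in S}f(v_j,v_k)$ that cancels against the Cauchy determinant, and the mirror argument for \eqref{oK-sumpart} goes through verbatim with the prefactor $(1-z)^{n-m}$ carried along. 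The net effect is that you have supplied a short, self-contained derivation of a statement the paper only cites, which is a strictly stronger deliverable than what the paper provides.
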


\printbibliography


\end{document}